\date{}
\renewcommand{\uppercasenonmath}[1]{}
\newtheorem{thm}[subsection]{Theorem}
\newtheorem{cor}[subsection]{Corollary }
\newtheorem{lem}[subsection]{Lemma}
\newtheorem{prop}[subsection]{Proposition}
    \numberwithin{equation}{section}
\begin{document}

%
%
%
%
%
%
%

\begin{center}

{\large  \bf The Parameters of Minimal Linear Codes }

\vskip 0.8cm
  {\small Wei Lu$^1$, Xia Wu$^1$\footnote{Supported by NSFC (Nos. 11971102, 11801070, 11771007)

  MSC: 94B05, 94A62}}, Xiwang Cao$^2$\\

{\small $^1$School of Mathematics, Southeast University, Nanjing
210096, China}\\
{\small $^2$Department of Math, Nanjing University of Aeronautics and Astronautics, Nanjing 211100, China}\\
{\small E-mail:
luwei1010@seu.edu.cn, wuxia80@seu.edu.cn, xwcao@nuaa.edu.cn}\\
{\small $^*$Corresponding author. (Email: wuxia80@seu.edu.cn)}
\vskip 0.8cm
\end{center}

{\bf Abstract:}
Let $k\leq n$ be two positive integers and $q$ a prime power.   The basic question in minimal linear codes is to determine if there exists an  $[n,k]_q$ minimal linear code.  The first objective of this paper is to present a new sufficient and necessary condition for
 linear codes to be minimal. Using this condition, it is easy to construct minimal linear codes
or to prove  some linear codes are minimal. The second objective of this paper is to use the  new sufficient and necessary condition to partially give an answer to  the  basic question in minimal linear codes. The third objective of this paper is to present four classes of minimal linear codes, which
generalize the results about the binary case given in \cite{LY2019}. One can find that our method is
much easier and more effective.

\section{\bf Introduction}
 Let $q$ be a prime power and $\mathbb{F}_q$ the finite field with $q$ elements. Let $n$ be a positive integer and $\mathbb{F}_q^n$  the vector space with dimension $n$ over $\mathbb{F}_q$. In this paper, all vector spaces are over $\mathbb{F}_q$ and  all vectors are row vectors.  For a vector $\mathbf{v}=(v_1, \dots, v_n)\in\mathbb{F}_q^n$, let Suppt$(\mathbf{v})$ $:= \{1 \leq i\leq n : v_i\neq 0\}$ be the \emph{support} of $\mathbf{v}$. For any two vectors $\mathbf{u}, \mathbf{v}\in \mathbb{F}_q^n$, if $\rm{Suppt}(\mathbf{u})\subseteq \rm{Suppt}(\mathbf{v})$, we say that $\mathbf{v}$ covers $\mathbf{u}$ (or $\mathbf{u}$ is covered by $\mathbf{v}$) and write $\mathbf{u}\preceq\mathbf{v}$. Clearly, $a\mathbf{v}\preceq \mathbf{v}$ for all $a\in \mathbb{F}_q$.

 An $[n,k]_q$ linear code $\mathcal{C}$ over $\mathbb{F}_q$ is a $k$-dimensional subspace of $\mathbb{F}_q^n$. Vectors in $\mathcal{C}$ are called \emph{codewords}. A codeword $\mathbf{c}$ in a linear code $\mathcal{C}$ is called \emph{minimal} if $\mathbf{c}$ covers only the codewords $a\mathbf{c}$ for all $a\in \mathbb{F}_q$, but no other codewords in $\mathcal{C}$. That is to say, if a codeword $\mathbf{c}$  is minimal in  $\mathcal{C}$, then for any codeword $\mathbf{b}$ in $\mathcal{C}$, $\mathbf{b}\preceq \mathbf{c}$ implies that $\mathbf{b}=a\mathbf{c}$ for some $a\in \mathbb{F}_q$.
  For an arbitrary linear code $\mathcal{C}$, it is  hard  to determine the set of its minimal codewords.


%
%

If every codeword in  $\mathcal{C}$ is minimal, then $\mathcal{C}$ is said to be a \emph{minimal linear code}. Minimal linear codes have interesting applications in secret sharing \cite{CDY2005, CCP2014, DY2003, M1995, YD2006} and secure two-party computation \cite{ABCH1995, CMP2013}, and could be decoded with a minimum distance decoding method \cite{AB1998}. Searching for minimal linear codes has been an interesting research topic in coding theory and cryptography.

The \emph{Hamming weight} of a vector $\mathbf{v}$ is wt$(\mathbf{v}):=\# \rm{Suppt}(\mathbf{v})$. In \cite{AB1998}, Ashikhmin and Barg gave a sufficient condition  on the minimum and maximum nonzero Hamming weights for a linear code to be minimal:
\begin{lem}\label{Ashikhmin-Barg}{$($\rm {Ashikhmin-Barg} \cite{AB1998})}
A linear code $\mathcal{C}$ over $\mathbb{F}_q$ is minimal if
$$\frac{w_{\rm min}}{w_{\rm max}}>\frac{q-1}{q},$$
where $w_{\rm min}$ and $w_{\rm max}$ denote the minimum and maximum nonzero Hamming weights in the code $\mathcal{C}$, respectively.
\end{lem}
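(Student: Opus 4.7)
The plan is to prove the contrapositive on a per-pair basis: I will show that if $\mathbf{b},\mathbf{c}\in\mathcal{C}$ satisfy $\mathbf{b}\preceq\mathbf{c}$ but $\mathbf{b}$ is not a scalar multiple of $\mathbf{c}$, then the ratio $w_{\min}/w_{\max}$ must be at most $(q-1)/q$. The contrapositive of this statement is exactly the lemma.

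The key computation is a double-counting identity for the Hamming weights of the coset $\{\mathbf{b}-a\mathbf{c}:a\in\mathbb{F}_q\}$. Since $\mathbf{b}\preceq\mathbf{c}$, every coordinate $i\notin\mathrm{Suppt}(\mathbf{c})$ satisfies $b_i=0$ as well, so $(\mathbf{b}-a\mathbf{c})_i=0$ for every $a$ at such positions. For $i\in\mathrm{Suppt}(\mathbf{c})$ the entry $(\mathbf{b}-a\mathbf{c})_i=b_i-ac_i$ is an affine non-constant function of $a$, so it vanishes for exactly one value of $a\in\mathbb{F}_q$ and is nonzero for the other $q-1$ values. Swapping the order of summation therefore yields
\begin{equation*}
\sum_{a\in\mathbb{F}_q}\mathrm{wt}(\mathbf{b}-a\mathbf{c})=\sum_{i\in\mathrm{Suppt}(\mathbf{c})}(q-1)=(q-1)\,\mathrm{wt}(\mathbf{c}).
\end{equation*}

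Now assume $\mathbf{b}$ is not of the form $a\mathbf{c}$ for any $a\in\mathbb{F}_q$. Then every one of the $q$ codewords $\mathbf{b}-a\mathbf{c}$ is nonzero and hence has weight at least $w_{\min}$, while $\mathrm{wt}(\mathbf{c})\leq w_{\max}$. Plugging these two bounds into the displayed identity gives $q\,w_{\min}\leq (q-1)w_{\max}$, i.e.\ $w_{\min}/w_{\max}\leq (q-1)/q$, which contradicts the hypothesis. Consequently every covering pair must be proportional, and $\mathcal{C}$ is minimal.

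I do not expect any real obstacle here: the entire argument is the weight-coset identity above, and the hypothesis is tailored precisely to make the two-sided estimate collapse. The only subtlety worth being careful about is checking that coordinates outside $\mathrm{Suppt}(\mathbf{c})$ contribute nothing to the sum, which is exactly where the assumption $\mathbf{b}\preceq\mathbf{c}$ is used; without it the identity would fail and the argument would break down.
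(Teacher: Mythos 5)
Your proof is correct. Note that the paper itself gives no proof of this lemma --- it is quoted from Ashikhmin--Barg \cite{AB1998} as a known result --- so there is no internal argument to compare against; your coset double-counting identity $\sum_{a\in\mathbb{F}_q}\mathrm{wt}(\mathbf{b}-a\mathbf{c})=(q-1)\,\mathrm{wt}(\mathbf{c})$ (valid precisely because $\mathbf{b}\preceq\mathbf{c}$ kills the coordinates outside $\mathrm{Suppt}(\mathbf{c})$) is the standard argument from the original reference, and the contrapositive step from ``$q$ nonzero codewords in the coset'' to $q\,w_{\min}\leq(q-1)\,w_{\max}$ is sound.
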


Inspired by Ding's work \cite{D2015,D2016},  many minimal linear codes with $\frac{w_{\rm min}}{w_{\rm max}}>\frac{q-1}{q}$  have been
constructed by selecting the proper defining sets or from functions over finite fields (see \cite{DD2015,HY2016,LCXM2018,SLP2016,SGP2017,TLQZ2016,WDX2015,X2016,YY2017,ZLFH2016}). Cohen et al. \cite{CMP2013} provided an example to show that the condition $\frac{w_{\rm min}}{w_{\rm max}}>\frac{q-1}{q}$ in Lemma \ref{Ashikhmin-Barg} is not necessary for a linear code to be minimal.  Recently, Ding, Heng and Zhou \cite{DHZ2018, HDZ2018}  generalized this  sufficient condition and derived a sufficient and necessary condition on all Hamming weights for a given linear code to be
minimal:
\begin{lem}\label{Heng-Ding-Zhou}{$($\rm { Ding-Heng-Zhou}\cite{DHZ2018,HDZ2018})}
A linear code $\mathcal{C}$ over $\mathbb{F}_q$ is minimal if and only if
$$\sum_{c\in\mathbb{F}_q^{*}}\rm{wt}(\mathbf{a}+c\mathbf{b})\neq (q-1)\rm{wt}(\mathbf{a})-\rm{wt}(\mathbf{b})$$
for any $\mathbb{F}_q$-linearly independent codewords $\mathbf{a},\mathbf{b}\in \mathcal{C}$.
\end{lem}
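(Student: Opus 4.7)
The plan is to turn the minimality condition into a support-containment condition on pairs of linearly independent codewords, and then to evaluate the sum $\sum_{c\in\mathbb{F}_q^*}\mathrm{wt}(\mathbf{a}+c\mathbf{b})$ explicitly to see that it encodes exactly this containment. First I would record the reformulation: by definition, $\mathcal{C}$ fails to be minimal precisely when there exist $\mathbb{F}_q$-linearly independent $\mathbf{a},\mathbf{b}\in\mathcal{C}$ with $\mathbf{b}\preceq\mathbf{a}$, i.e.\ $\mathrm{Suppt}(\mathbf{b})\subseteq\mathrm{Suppt}(\mathbf{a})$. Thus it suffices to prove the coordinate-free identity: for any two codewords $\mathbf{a},\mathbf{b}$,
\[
\sum_{c\in\mathbb{F}_q^{*}}\mathrm{wt}(\mathbf{a}+c\mathbf{b})=(q-1)\mathrm{wt}(\mathbf{a})-\mathrm{wt}(\mathbf{b})\ \Longleftrightarrow\ \mathbf{b}\preceq\mathbf{a}.
\]

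Next, I would compute the left-hand side by swapping the order of summation. Writing $\mathrm{wt}(\mathbf{a}+c\mathbf{b})=\sum_{i=1}^n \mathbf{1}[a_i+cb_i\neq 0]$, I separate the coordinates $i\in\{1,\dots,n\}$ into four types according to whether $a_i,b_i$ are zero or not. Let
\[
A=|\{i:a_i\neq 0,\,b_i=0\}|,\quad B=|\{i:a_i=0,\,b_i\neq 0\}|,\quad C=|\{i:a_i\neq 0,\,b_i\neq 0\}|,
\]
so $\mathrm{wt}(\mathbf{a})=A+C$ and $\mathrm{wt}(\mathbf{b})=B+C$. For each coordinate type I count the number of $c\in\mathbb{F}_q^{*}$ for which $a_i+cb_i\neq 0$: coordinates contributing $A$ or $B$ each give $q-1$ nonzero values, while coordinates contributing $C$ give $q-2$ (since the unique $c=-a_i/b_i$ lies in $\mathbb{F}_q^{*}$). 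Coordinates with $a_i=b_i=0$ contribute $0$. Thus
\[
\sum_{c\in\mathbb{F}_q^{*}}\mathrm{wt}(\mathbf{a}+c\mathbf{b})=(q-1)(A+B)+(q-2)C.
\]

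A short algebraic manipulation then gives
\[
\sum_{c\in\mathbb{F}_q^{*}}\mathrm{wt}(\mathbf{a}+c\mathbf{b})-\bigl[(q-1)\mathrm{wt}(\mathbf{a})-\mathrm{wt}(\mathbf{b})\bigr]=qB,
\]
so the equality in the lemma holds if and only if $B=0$, i.e.\ $\mathrm{Suppt}(\mathbf{b})\subseteq\mathrm{Suppt}(\mathbf{a})$. Combining with the opening reformulation, $\mathcal{C}$ is minimal iff no pair of linearly independent codewords satisfies $B=0$, iff the stated inequality holds for every linearly independent pair.

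There is no serious obstacle here; the only delicate point is the case analysis in counting $|\{c\in\mathbb{F}_q^{*}:a_i+cb_i\neq 0\}|$, in particular noticing that when $a_i=0$ and $b_i\neq 0$ the forbidden value $c=0$ is excluded from $\mathbb{F}_q^{*}$, so one gets $q-1$ rather than $q-2$—this asymmetry between the $A$ and $B$ coordinates is what creates the $qB$ discrepancy that detects support containment.
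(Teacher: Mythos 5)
Your proof is correct. Note that the paper does not prove this lemma at all --- it is quoted as a known result from the cited references of Ding--Heng--Zhou --- so there is nothing in the paper to compare against; your argument (reduce minimality to the existence of a linearly independent covering pair, then show by the coordinate-type count $A,B,C$ that the sum differs from $(q-1)\mathrm{wt}(\mathbf{a})-\mathrm{wt}(\mathbf{b})$ by exactly $qB$, so equality detects $\mathrm{Suppt}(\mathbf{b})\subseteq\mathrm{Suppt}(\mathbf{a})$) is the standard proof of this fact and all the counting checks out, including the one delicate point that $c=-a_i/b_i$ lies in $\mathbb{F}_q^{*}$ when $a_i,b_i$ are both nonzero.
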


Based on {\bf Lemma \ref{Heng-Ding-Zhou}}, Ding et al. presented three infinite families of minimal binary linear codes with $\frac{w_{\rm min}}{w_{\rm max}}\leq\frac{1}{2}$ in \cite{DHZ2018} and an infinite family of minimal ternary linear codes with $\frac{w_{\rm min}}{w_{\rm max}}<\frac{2}{3}$ in \cite{HDZ2018}, respectively. In \cite{ZYW2018}, Zhang et al. constructed four families of minimal binary linear codes
with $\frac{w_{\rm min}}{w_{\rm max}}\leq\frac{1}{2}$ from Krawtchouk polynomials. Very recently, Bartoli and
Bonini \cite{BB2019} provided an infinite family of minimal linear codes; also in \cite{XQ2019} Xu and Qu constructed three classes  of minimal linear codes with $\frac{w_{\rm min}}{w_{\rm max}}<\frac{p-1}{p}$ for any odd prime $p$.


The first objective in this paper is to propose a new sufficient and necessary condition for a given linear code to be
minimal. Based on our  sufficient and necessary condition,  it is easy to construct minimal linear codes or to prove that some linear codes  are minimal. Since our sufficient and necessary  condition doesn't need the weight distribution of the linear code, the minimum (Hamming) distance is not discussed in our paper. In fact, in \cite{DHZ2018}, Ding et al. pointed out that linear codes employed for secret sharing are preferred to be minimal, in order to make the access structure of the secret sharing scheme to be special \cite{DY2003, YD2006}. Such codes may not have very good error-correcting capability. So  minimum (Hamming) distance seems not so important on these application  scenarios.

 Let $k\leq n$ be two positive integers and $q$ a prime power.   The basic question in minimal linear codes is to determine if there exists an  $[n,k]_q$ minimal linear code.
The second objective of this paper is to give the following two theorems and a corollary, which can partially answer the above question.
\begin{thm}\label{main thoerem}
Let $k$ be a positive integer  and $q$ a prime power. Then there is a positive integer $n(k;q)$ satisfies the following condition: for any positive integer $n$, there exists an $[n,k]_q$ minimal linear code if and only if $n\geq n(k;q)$.
\end{thm}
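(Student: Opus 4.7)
The plan is to show two things: (a) for each fixed $k$ and $q$ there exists \emph{some} length $n$ for which an $[n,k]_q$ minimal linear code exists, and (b) the set $S(k;q) := \{n \in \mathbb{Z}_{>0} : \text{an } [n,k]_q \text{ minimal linear code exists}\}$ is upward closed. Once both are established, we define $n(k;q) := \min S(k;q)$, which is well-defined because $S(k;q)$ is nonempty by (a) and bounded below by $k$ (any $[n,k]_q$ code forces $n\geq k$), and the conclusion follows immediately.

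For step (a), I would exhibit an explicit minimal code. The natural candidate is the $q$-ary simplex code $\mathcal{S}_k(q)$ of length $N_k := (q^k-1)/(q-1)$ and dimension $k$, whose nonzero codewords all have constant weight $q^{k-1}$. Since $w_{\min}/w_{\max} = 1 > (q-1)/q$, Lemma \ref{Ashikhmin-Barg} guarantees that $\mathcal{S}_k(q)$ is minimal. Hence $N_k \in S(k;q)$, so $S(k;q)\neq \emptyset$.

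For step (b), I would use zero-padding. Suppose $\mathcal{C}\subseteq\mathbb{F}_q^n$ is an $[n,k]_q$ minimal linear code and set
\[
\mathcal{C}' := \{(\mathbf{c},0) : \mathbf{c}\in \mathcal{C}\} \subseteq \mathbb{F}_q^{n+1}.
\]
Clearly $\mathcal{C}'$ is a $k$-dimensional subspace of $\mathbb{F}_q^{n+1}$, so it is an $[n+1,k]_q$ linear code. For any $(\mathbf{c},0)\in\mathcal{C}'$ we have $\mathrm{Suppt}((\mathbf{c},0))=\mathrm{Suppt}(\mathbf{c})\subseteq\{1,\dots,n\}$, so $(\mathbf{b},0)\preceq(\mathbf{c},0)$ in $\mathcal{C}'$ if and only if $\mathbf{b}\preceq\mathbf{c}$ in $\mathcal{C}$. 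By minimality of $\mathcal{C}$ this forces $\mathbf{b}=a\mathbf{c}$ for some $a\in\mathbb{F}_q$, hence $(\mathbf{b},0)=a(\mathbf{c},0)$, showing $\mathcal{C}'$ is minimal. Iterating this construction shows $n\in S(k;q)\Rightarrow n+1\in S(k;q)$, so $S(k;q)$ is upward closed. Combining (a) and (b) yields the theorem.

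Neither step poses a genuine obstacle: the only subtlety worth flagging is that the theorem uses the liberal definition of $[n,k]_q$ code (a $k$-dimensional subspace of $\mathbb{F}_q^n$, with zero columns in a generator matrix permitted), which is precisely what the paper assumes at the start of Section~1. If one instead required non-degeneracy (no zero coordinate), the padding argument in (b) would need to be replaced by appending a carefully chosen extra coordinate, e.g.\ repeating an existing column or concatenating with a repetition code, while still preserving minimality; however, this refinement is unnecessary under the stated conventions.
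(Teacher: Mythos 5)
Your proposal is correct, and its skeleton is exactly the paper's: show the set of feasible lengths is nonempty, show it is upward closed, and take the minimum. The difference lies in how the two sub-steps are established. For nonemptiness the paper takes $D=\mathbb{F}_q^k$ (length $q^k$) and verifies minimality with its new criterion (Theorem \ref{sn1} via Proposition \ref{41}), whereas you use the simplex code of length $(q^k-1)/(q-1)$ together with the Ashikhmin--Barg bound (Lemma \ref{Ashikhmin-Barg}); both are valid witnesses, and yours is shorter and independent of the paper's machinery. For upward closure the paper proves the more general Proposition \ref{42} (enlarging the defining multiset $D$ by \emph{any} columns preserves minimality, again via the new criterion), of which your zero-padding argument is the special case where the appended column is the zero vector; your direct support computation is elementary and equally rigorous. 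What the paper's route buys is that both sub-steps double as advertisements for its new sufficient-and-necessary condition, and Proposition \ref{42} in full generality is reused later in Section \ref{section Applications}; what your route buys is self-containedness. You are also right to flag the convention issue: the paper's definition of an $[n,k]_q$ code as any $k$-dimensional subspace (and its use of multisets $D$ possibly containing $\mathbf{0}$, e.g.\ in Proposition \ref{41}) licenses degenerate coordinates, so the padding step is legitimate as stated.
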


In the following theorem, we give a upper bound and a lower bound of $n(k;q)$.
\begin{thm}
$$q(k-1)<n(k;q)\leq (q-1)\frac{k(k-1)}{2}+k.$$
As a special case,
$$n(2;q)=q+1.$$
\end{thm}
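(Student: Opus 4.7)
The plan is to prove both bounds through a single geometric reformulation of minimality. Writing $G=(g_1,\ldots,g_n)$ for the generator matrix of $\mathcal{C}$ with columns $g_i\in\mathbb{F}_q^k$, so that codewords are of the form $uG$ with $u\in\mathbb{F}_q^k$, the relation $\mathbf{b}=vG\preceq\mathbf{a}=uG$ becomes $\{i:ug_i=0\}\subseteq\{i:vg_i=0\}$, i.e.\ every column in the hyperplane $u^\perp$ also lies in $v^\perp$. A short dualization shows that $\mathcal{C}$ is minimal iff, for every nonzero $u\in\mathbb{F}_q^k$, the set $\{g_i:ug_i=0\}$ spans $u^\perp$; equivalently, every hyperplane $H\subset\mathbb{F}_q^k$ must contain $k-1$ linearly independent columns of $G$. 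This geometric criterion drives both directions.

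For the upper bound I will exhibit an $[n,k]_q$ minimal code with $n=k+(q-1)\binom{k}{2}$ whose generator matrix has as columns $e_1,\ldots,e_k$ together with $e_i+ae_j$ for all $1\leq i<j\leq k$ and $a\in\mathbb{F}_q^*$. Fix a hyperplane $H=\ker\phi$ with $\phi(x)=\sum\lambda_i x_i$ and split $\{1,\ldots,k\}=I\sqcup J$, where $I=\{i:\lambda_i\neq 0\}$. The standard basis vectors $e_j$ with $j\in J$ are columns lying in $H$ and span $\langle e_j:j\in J\rangle$. Among the mixed columns, exactly one choice of $a$ per pair $i,j\in I$ produces a column $e_i-(\lambda_i/\lambda_j)e_j$ in $H$; fixing any pivot $i_0\in I$, the $|I|-1$ such vectors involving $i_0$ form a linearly independent family spanning $H\cap\langle e_i:i\in I\rangle$. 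Since $H$ decomposes as $\langle e_j:j\in J\rangle\oplus(H\cap\langle e_i:i\in I\rangle)$ of dimension $|J|+(|I|-1)=k-1$, the columns in $H$ span $H$, verifying minimality.

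For the lower bound, assume $\mathcal{C}$ is minimal; discarding any zero columns preserves minimality and only weakly decreases $n$, so assume every $g_i\neq 0$. Count pairs $(H,g_i)$ with $g_i\in H$ in two ways: each nonzero $g_i$ lies in $\frac{q^{k-1}-1}{q-1}$ of the $\frac{q^k-1}{q-1}$ hyperplanes, and by the criterion above each hyperplane contains at least $k-1$ columns, so
\begin{equation*}
n\cdot\frac{q^{k-1}-1}{q-1}\geq(k-1)\cdot\frac{q^k-1}{q-1},
\end{equation*}
which rearranges to $n\geq (k-1)q+\frac{(k-1)(q-1)}{q^{k-1}-1}>q(k-1)$, the strict inequality coming from the fact that the remainder term is positive whenever $k\geq 2$ and $q\geq 2$. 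In the special case $k=2$ the two estimates collapse to $q<n(2;q)\leq q+1$, forcing $n(2;q)=q+1$.

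The main technical obstacle lies in the upper bound: one must verify the hyperplane-spanning property for \emph{every} $H$, including degenerate hyperplanes where many $\lambda_i$ vanish. The coordinate split $I\sqcup J$ combined with the pivot choice reduces this to a single linear independence check, but some care will be needed when $|I|=1$, where the spanning of $H$ is furnished by the $e_j$'s alone and no mixed column contributes.
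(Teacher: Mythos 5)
Your proposal is correct and follows essentially the same route as the paper: the upper bound uses exactly the same generator set $\{\mathbf{e}_1,\dots,\mathbf{e}_k\}\cup\{\mathbf{e}_i+a\mathbf{e}_j\}$ with a pivot-based linear-independence check inside each hyperplane, and the lower bound is the same double count of incidences between columns and hyperplanes (the paper counts pairs $(\mathbf{y},\mathbf{d})$ with $\mathbf{y}$ a nonzero vector rather than a hyperplane, which differs only by a factor of $q-1$ on both sides). The derivation of $n(2;q)=q+1$ by squeezing the two bounds is likewise identical.
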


As a corollary, we have:
\begin{cor}
Let $k\leq n$ be two positive integers and $q$ a prime power. Then we have
\begin{enumerate}
  \item if $n\geq (q-1)\frac{k(k-1)}{2}+k$, then there exists an $[n,k]_q$ minimal linear code;
  \item if $n\leq q(k-1)$, then any $[n,k]_q$  linear code is not minimal.
\end{enumerate}
\end{cor}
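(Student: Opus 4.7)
The plan is to deduce the corollary directly from the two immediately preceding results, namely Theorem~\ref{main thoerem} (which asserts that an $[n,k]_q$ minimal linear code exists if and only if $n\ge n(k;q)$) together with the bounds $q(k-1)<n(k;q)\le (q-1)\frac{k(k-1)}{2}+k$ established in the second theorem. No further construction or obstruction argument is needed at this stage; the corollary is essentially a substitution of the bounds into the characterization.

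For part~(1), I would chain the hypothesis with the upper bound: if $n\ge (q-1)\frac{k(k-1)}{2}+k$ and $n(k;q)\le (q-1)\frac{k(k-1)}{2}+k$, then $n\ge n(k;q)$, and Theorem~\ref{main thoerem} supplies the desired $[n,k]_q$ minimal linear code. For part~(2), I would invoke the strict lower bound $q(k-1)<n(k;q)$: the hypothesis $n\le q(k-1)$ forces $n<n(k;q)$, and the contrapositive of Theorem~\ref{main thoerem} rules out any $[n,k]_q$ minimal linear code of that length.

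The main obstacle, honestly, is not in the corollary itself but upstream. The real content lies in establishing the existence of the threshold $n(k;q)$ (which presumably follows from a closure-under-lengthening property for minimal codes, for instance by appending a short minimal block to any given minimal code) and in proving the two bounds (the upper bound via an explicit family achieving length $(q-1)\frac{k(k-1)}{2}+k$, and the lower bound by a structural argument on the columns of the generator matrix showing that too few columns force a nontrivial covering among codewords). Once both are in place, the corollary is a one-line consequence, and I would simply present it as such.
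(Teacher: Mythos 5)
Your proposal is correct and matches the paper's (implicit) argument exactly: the corollary is obtained by substituting the bounds $q(k-1)<n(k;q)\leq (q-1)\frac{k(k-1)}{2}+k$ into the characterization ``an $[n,k]_q$ minimal linear code exists if and only if $n\geq n(k;q)$,'' and your diagnosis that the real work lies upstream (the lengthening property, the explicit construction for the upper bound, and the counting argument for the lower bound) is precisely how the paper organizes it.
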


The third objective of this paper  is to present four classes of  minimal linear codes, which generalize the results about the binary case given in \cite{LY2019}. One can find that our method is much easier and more effective.

The rest of this paper is organized as follows. In Section \ref{section Preliminaries}, we give  basic results on linear algebra and  linear codes. In Section \ref{section sn}, we  present  new sufficient and necessary conditions for a codeword to be minimal and for a linear code to be minimal. In Section \ref{section Parameters}, we discuss the parameters $n, k, q$ of linear codes, and we partially answer the question  that if there exists  $[n,k]_q$ minimal linear codes.
In Section \ref{section Applications}, we use the results we obtained in this paper to generalize the four classes of minimal linear codes in \cite{LY2019}, and we give a very brief proof to show the codes are minimal. In section \ref{section Concluding remarks}, we conclude this paper.


\section{\bf Preliminaries}\label{section Preliminaries}
\subsection{Inner product and dual}
 Let $k$ be a positive integer. For two vectors $\mathbf{x}=(x_1,x_2,...,x_k)$, $\mathbf{y}=(y_1,y_2,...,y_k)\in\ \mathbb{F}_q^k$, their \emph{Euclidean inner product} is:
$$<\mathbf{x},\mathbf{y}>:=\mathbf{x}\mathbf{y}^T=\sum_{i=1}^k{x}_i{y}_i.$$

For any subset $S\subseteq \mathbb{F}_q^k$,  we define
$$S^\perp:=\{\mathbf{y}\in \mathbb{F}_q^k\mid <\mathbf{y},\mathbf{x}>=0, \it{{\rm{for\  any}}\  \mathbf{x}\in S}\}.$$
By the definition of $S^\perp$, the following two facts are immediate:\\
1. $S\subseteq (S^\perp)^\perp$;\\
2. If $S$ is a linear subspace of $\mathbb{F}_q^k$, then ${\rm{dim}}S+{\rm{dim}}S^\perp=k$.

\subsection{A general representation of linear codes}
Let $\mathcal{C}$ be an arbitrary $[n,k]_q$ linear code and $G$ its generator matrix. Then $G$ is a $k\times n$ matrix   over $\mathbb{F}_q$, rank$(G)=k$ and $\mathcal{C}$ can be expressed in the following form:
$$\mathcal{C}=\{\mathbf{c}\mathbf{(x)}=\mathbf{x}G, \mathbf{x}\in \mathbb{F}_q^k\}.$$

Let $G=(\mathbf{d}_{1}^{T},...,\mathbf{d}_{n}^{T})$, where $\mathbf{d}_1,...,\mathbf{d}_n\in \mathbb{F}_q^k$ and $D:=\{\mathbf{d}_1,...,\mathbf{d}_n\}$ be   a multiset. Then $\mathcal{C}$ can also be expressed in the following way:
$$\mathcal{C}=\mathcal{C}(D)=\{{\mathbf{c}\mathbf{(x)}}=\mathbf{c}(\mathbf{x};D)=(\mathbf{x}\mathbf{d}_{1}^{T},...,\mathbf{x}\mathbf{d}_{n}^{T}), \mathbf{x}\in \mathbb{F}_q^k\}.$$

\subsection{Properties of covering}
\begin{lem}\label{cover}
Let $\mathbf{u}=(u_1,...u_n), \mathbf{v}=(v_1,...v_n)\in \mathbb{F}_q^n$. Then the following conditions are equivalent:\\
(1) $\mathbf{u}\preceq \mathbf{v}$   $($i.e. $\rm{Suppt}(\mathbf{u})\subseteq \rm{Suppt}(\mathbf{v}))$;\\
(2) for any $1\leq i\leq k,$ if $u_i\neq 0$, then $v_i\neq 0$;\\
(3) for any $1\leq i\leq k,$ if $v_i= 0$, then $u_i= 0$;\\
(4) \rm{Zero}$(\mathbf{v})\subseteq  \rm{Zero}(\mathbf{u})$, where \rm{Zero}$(\mathbf{u}):=\{1\leq i\leq n: u_i=0\}=[1,...,n]\setminus \rm{Suppt}(\mathbf{u}).$
\end{lem}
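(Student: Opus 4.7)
The plan is to verify the four conditions are equivalent by showing the cyclic chain $(1) \Leftrightarrow (2) \Leftrightarrow (3) \Leftrightarrow (4)$, each step being a direct unpacking of the relevant definition. I would first recall that for a vector $\mathbf{u} \in \mathbb{F}_q^n$, the sets $\mathrm{Suppt}(\mathbf{u}) = \{i : u_i \neq 0\}$ and $\mathrm{Zero}(\mathbf{u}) = \{i : u_i = 0\}$ partition $[1,n]$, so that set containment on one side corresponds to reverse containment on the other after taking complements.

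For $(1) \Leftrightarrow (2)$, the assertion $\mathrm{Suppt}(\mathbf{u}) \subseteq \mathrm{Suppt}(\mathbf{v})$ is, by the definition of subset, the statement ``for every index $i$, if $i \in \mathrm{Suppt}(\mathbf{u})$ then $i \in \mathrm{Suppt}(\mathbf{v})$,'' which by the definition of $\mathrm{Suppt}$ translates exactly to ``$u_i \neq 0 \Rightarrow v_i \neq 0$.'' For $(2) \Leftrightarrow (3)$, the two implications ``$u_i \neq 0 \Rightarrow v_i \neq 0$'' and ``$v_i = 0 \Rightarrow u_i = 0$'' are logical contrapositives of each other, hence equivalent for each $i$ and therefore equivalent as universal statements. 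For $(3) \Leftrightarrow (4)$, the same type of unpacking used in the first step applies: $\mathrm{Zero}(\mathbf{v}) \subseteq \mathrm{Zero}(\mathbf{u})$ says exactly that $v_i = 0 \Rightarrow u_i = 0$ for every $i$.

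There is no real obstacle here; the lemma is pure bookkeeping about definitions, and the only subtlety is to be explicit about the partition $[1,n] = \mathrm{Suppt}(\mathbf{u}) \sqcup \mathrm{Zero}(\mathbf{u})$ so that the equivalence between a containment of supports and the reverse containment of zero-sets is transparent. I would present the argument as one short paragraph with the three equivalences stated in sequence, using ``by definition'' and ``by contraposition'' as justifications, rather than setting up any elaborate notation. The index bound ``$1 \leq i \leq k$'' in the statement appears to be a typo for ``$1 \leq i \leq n$'' given that the vectors lie in $\mathbb{F}_q^n$; I would either silently correct this or point it out in a parenthetical remark.
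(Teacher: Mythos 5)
Your proposal is correct; the paper states this lemma without any proof at all, treating it as immediate from the definitions, and your argument (unpacking the subset containments and using contraposition for the middle step) is exactly the routine verification any reader would supply. You are also right that the bound ``$1\leq i\leq k$'' in conditions (2) and (3) is a typo for ``$1\leq i\leq n$.''
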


\section{\bf A new sufficient and necessary condition for $q$-ary linear codes to be minimal}\label{section sn}
Let $k\leq n$ be two positive integers and $q$ a prime power.
Let $D$ be a multiset with elements in $\mathbb{F}_q^k$, rank$D=k$ and $\# D=n$. Let
$$\mathcal{C}(D)=\{{\mathbf{c}\mathbf{(x)}}=\mathbf{c}(\mathbf{x};D)=(\mathbf{x}\mathbf{d}_{1}^{T},...,\mathbf{x}\mathbf{d}_{n}^{T}), \mathbf{x}\in \mathbb{F}_q^k\}.$$
Then $\mathcal{C}(D)$ is an $[n,k]_q$ linear code.

Let $\mathbf{y}\in \mathbb{F}_q^k$ be an arbitrary vector. We shall first present a new sufficient and necessary condition for
the codeword $\mathbf{c(y)}\in \mathcal{C}(D)$ to be minimal. Some concepts are needed.

For any $\mathbf{y}\in \mathbb{F}_q^k$, we define
$$H(\mathbf{y}):=\mathbf{y}^\perp=\{\mathbf{x}\in \mathbb{F}_q^k\mid\mathbf{xy}^{T}=0\},$$
$$H(\mathbf{y},D):=D\cap H(\mathbf{y})=\{\mathbf{x}\in D\mid\mathbf{xy}^{T}=0\},$$
$$V(\mathbf{y},D):={\rm{Span}}(H(\mathbf{y},D)).$$
It is obvious that $H(\mathbf{y},D)\subseteq V(\mathbf{y},D)\subseteq H(\mathbf{y})$.

The following proposition is also needed.
\begin{prop}\label{31}
For any $\mathbf{x}, \mathbf{y}\in \mathbb{F}_q^k,\ \mathbf{c(x)}\preceq \mathbf{c(y)}$ if and only if $H(\mathbf{y},D)\subseteq H(\mathbf{x},D)$.\end{prop}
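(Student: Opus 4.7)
The plan is to reduce the claim to a pure statement about zero sets using Lemma~\ref{cover} (part (4)), and then translate zero sets of codewords into intersections with the hyperplanes $H(\mathbf{x})$ and $H(\mathbf{y})$.

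First I would observe that the $i$-th coordinate of $\mathbf{c}(\mathbf{x})=(\mathbf{x}\mathbf{d}_1^T,\dots,\mathbf{x}\mathbf{d}_n^T)$ vanishes exactly when $\mathbf{d}_i\in H(\mathbf{x})$. Therefore
\[
\mathrm{Zero}(\mathbf{c}(\mathbf{x}))=\{\,1\le i\le n : \mathbf{d}_i\in H(\mathbf{x})\,\},
\]
which, in terms of the multiset $D$, corresponds precisely to $H(\mathbf{x},D)=D\cap H(\mathbf{x})$. The same holds with $\mathbf{y}$ in place of $\mathbf{x}$. This is the key translation from vector-space language to coordinate-index language.

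Next, by Lemma~\ref{cover}(4), $\mathbf{c}(\mathbf{x})\preceq \mathbf{c}(\mathbf{y})$ is equivalent to $\mathrm{Zero}(\mathbf{c}(\mathbf{y}))\subseteq \mathrm{Zero}(\mathbf{c}(\mathbf{x}))$. Combining with the identification above, this condition says exactly that whenever $\mathbf{d}_i\in H(\mathbf{y})$ we also have $\mathbf{d}_i\in H(\mathbf{x})$, i.e. $H(\mathbf{y},D)\subseteq H(\mathbf{x},D)$ as sub-multisets of $D$. Both implications are obtained by the same chain of equivalences read in opposite directions, so I would present the argument as a single chain of ``iff''s.

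The only subtlety I need to be careful about is that $D$ is a multiset rather than a set, so equalities and containments of the form $H(\mathbf{y},D)\subseteq H(\mathbf{x},D)$ should be understood either as containment of the corresponding index subsets of $\{1,\dots,n\}$, or equivalently as containment with multiplicities. Since both sides are cut out from $D$ by membership conditions depending only on the value $\mathbf{d}_i$, there is no mismatch: if an index $i$ with $\mathbf{d}_i\in H(\mathbf{y})$ must also satisfy $\mathbf{d}_i\in H(\mathbf{x})$, the induced multiset containment is automatic. Thus there is no real obstacle; the argument is a short unwinding of the definitions of $\mathbf{c}(\mathbf{x})$, $H(\mathbf{x},D)$ and covering, followed by an appeal to Lemma~\ref{cover}.
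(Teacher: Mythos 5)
Your proposal is correct and follows essentially the same route as the paper's proof: identify $\mathrm{Zero}(\mathbf{c}(\mathbf{x}))$ with the indices $i$ for which $\mathbf{d}_i\in H(\mathbf{x},D)$, then invoke Lemma~\ref{cover}(4) to convert the covering relation into containment of zero sets. Your extra remark on reading the containment of sub-multisets via index sets is a welcome clarification of a point the paper passes over silently, but it does not change the argument.
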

\begin{proof}
By the definition of the linear code, we know that
$$\mathbf{c}(\mathbf{x})=(\mathbf{x}\mathbf{d}_{1}^{T},...,\mathbf{x}\mathbf{d}_{n}^{T})\ {\rm{and}}\ \mathbf{c}(\mathbf{y})=(\mathbf{y}\mathbf{d}_{1}^{T},...,\mathbf{y}\mathbf{d}_{n}^{T}).$$
It is easy to see $i\in \rm{Zero}(\mathbf{c(x)})$ if and only if $\mathbf{d}_i\in H(\mathbf{x},D)$. Similarly, $i\in \rm{Zero}(\mathbf{c(y)})$ if and only if $\mathbf{d}_i\in H(\mathbf{y},D)$. So $\rm{Zero}(\mathbf{c(y)})\subseteq \rm{Zero}(\mathbf{c(x)})$ if and only if $H(\mathbf{y},D)\subseteq H(\mathbf{x},D)$. By {\bf Lemma \ref{cover}}, we know  $\rm{Zero}(\mathbf{c(y)})\subseteq \rm{Zero}(\mathbf{c(x)})$ if and only if $\mathbf{c(x)}\preceq \mathbf{(c(y))}$, the result is immediate.
\qed\end{proof}

Now we can give the new sufficient and necessary condition for a codeword $\mathbf{c(y)}\in \mathcal{C}(D)$ to be minimal:

\begin{thm}\label{sn}
Let $\mathbf{y}\in \mathbb{F}_q^k\backslash \{\mathbf{0}\}$. Then the following three conditions are equivalent:
\begin{enumerate}
  \item  $\mathbf{c(y)}$ is minimal in $\mathcal{C}(D)$;
  \item {\rm{dim}}$V(\mathbf{y},D)=k-1$;
  \item $V(\mathbf{y},D)=H(\mathbf{y})$.
\end{enumerate}

\end{thm}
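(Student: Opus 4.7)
The plan is to work with the orthogonal complement machinery developed in Section \ref{section sn} and turn the question of minimality into a purely linear-algebraic question about $V(\mathbf{y},D)^\perp$.

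First I would dispose of the equivalence (2)$\Leftrightarrow$(3). Since $\mathbf{y}\neq \mathbf{0}$, the hyperplane $H(\mathbf{y})=\mathbf{y}^\perp$ has dimension $k-1$, and we already noted $V(\mathbf{y},D)\subseteq H(\mathbf{y})$. Hence $V(\mathbf{y},D)=H(\mathbf{y})$ is equivalent to $\dim V(\mathbf{y},D)=k-1$ just by dimension comparison.

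For (1)$\Leftrightarrow$(3), I would proceed in two steps. Step one: rewrite the covering condition. By Proposition \ref{31}, $\mathbf{c(x)}\preceq \mathbf{c(y)}$ is equivalent to $H(\mathbf{y},D)\subseteq H(\mathbf{x},D)$, i.e., every $\mathbf{d}\in D$ with $\mathbf{d}\mathbf{y}^T=0$ also satisfies $\mathbf{d}\mathbf{x}^T=0$. This says $\mathbf{x}$ is orthogonal to every element of $H(\mathbf{y},D)$, and since orthogonality only sees the span, this is the same as $\mathbf{x}\in V(\mathbf{y},D)^\perp$. Step two: translate the conclusion $\mathbf{c(x)}=a\mathbf{c(y)}$ of minimality. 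Because $\mathrm{rank}\, D=k$, the linear map $\mathbf{x}\mapsto \mathbf{c(x)}$ is injective, so $\mathbf{c(x)}=a\mathbf{c(y)}=\mathbf{c}(a\mathbf{y})$ is equivalent to $\mathbf{x}=a\mathbf{y}$, i.e.\ $\mathbf{x}\in\mathrm{Span}(\mathbf{y})$.

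Combining these steps, $\mathbf{c(y)}$ is minimal if and only if
\[
V(\mathbf{y},D)^\perp \subseteq \mathrm{Span}(\mathbf{y}).
\]
The reverse inclusion always holds: from $V(\mathbf{y},D)\subseteq H(\mathbf{y})=\mathbf{y}^\perp$, taking perps gives $\mathrm{Span}(\mathbf{y})=(\mathbf{y}^\perp)^\perp\subseteq V(\mathbf{y},D)^\perp$. Therefore minimality is equivalent to $V(\mathbf{y},D)^\perp=\mathrm{Span}(\mathbf{y})$, a one-dimensional subspace. By the dimension formula recalled in Section \ref{section Preliminaries}, this is equivalent to $\dim V(\mathbf{y},D)=k-1$, which is condition (2), hence also condition (3).

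The only subtle point is the injectivity step: one has to use $\mathrm{rank}\, D=k$ (i.e.\ that the generator matrix $G$ formed from $D$ has full row rank) to pass from $\mathbf{c(x)}=a\mathbf{c(y)}$ to $\mathbf{x}=a\mathbf{y}$. Everything else is a clean application of $(S^\perp)^\perp\supseteq S$ and the dimension identity $\dim S+\dim S^\perp=k$, so I expect no real obstacle.
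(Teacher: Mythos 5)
Your proposal is correct and uses essentially the same machinery as the paper's proof: Proposition \ref{31}, the inclusion $V(\mathbf{y},D)\subseteq H(\mathbf{y})$, and the dimension identity for orthogonal complements. The only difference is organizational --- you package everything as a single chain of equivalences by observing that $\{\mathbf{x}: \mathbf{c(x)}\preceq \mathbf{c(y)}\}=V(\mathbf{y},D)^\perp$, whereas the paper proves the implication $(3)\Rightarrow(1)$ and its contrapositive separately; your explicit appeal to $\mathrm{rank}\,D=k$ for injectivity of $\mathbf{x}\mapsto\mathbf{c(x)}$ is a point the paper uses only implicitly.
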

\begin{proof}
The equivalence of (2) and (3) is easy. Since $V(\mathbf{y},D)\subseteq H(\mathbf{y})$ and dim $H(\mathbf{y})=k-1$, we get dim$V(\mathbf{y},D)=k-1$ if and only if $V(\mathbf{y},D)=H(\mathbf{y})$.

Next, we prove the equivalence of (1) and (3).

First, we will prove that if $V(\mathbf{y},D)=H(\mathbf{y})$, then  $\mathbf{c(y)}$ is minimal. Let  $\mathbf{c(x)}\preceq \mathbf{c(y)}$. Then by {\bf Proposition \ref{31}}, $H(\mathbf{y},D)\subseteq H(\mathbf{x},D)$ and $V(\mathbf{y},D)\subseteq V(\mathbf{x},D)$. Moreover, $H(\mathbf{y})=V(\mathbf{y},D)\subseteq V(\mathbf{x},D)\subseteq H(\mathbf{x})$. Since dim$H(\mathbf{y})=k-1=$ dim$H(\mathbf{x})$, we get $H(\mathbf{y})=H(\mathbf{x})$. So $\mathbf{x}\in H(\mathbf{x})^\perp=H(\mathbf{y})^\perp= \mathbb{F}_q\mathbf{y}$, and then there exists $a\in  \mathbb{F}_q$  such that $\mathbf{x}=a\mathbf{y}$ and $\mathbf{c(x)}=a\mathbf{c(y)}$. Hence $\mathbf{c(y)}$ is minimal.

Next, we will prove that if $V(\mathbf{y},D)\neq H(\mathbf{y})$, then $\mathbf{c(y)}$ is not minimal. Since $V(\mathbf{y},D)\subseteq H(\mathbf{y})$ and $V(\mathbf{y},D)\neq H(\mathbf{y})$, we get dim$V(\mathbf{y},D)<$
dim$H(\mathbf{y})=k-1$ and dim$V(\mathbf{y},D)^\perp =k-$dim$V(\mathbf{y},D)\geq 2$. It follows that there exists $\mathbf{x}\in V(\mathbf{y},D)^\perp$ satisfies that $\mathbf{x},\mathbf{y}$ are linearly independent. For any $\mathbf{d}_i\in H(\mathbf{y},D)\subseteq V(\mathbf{y},D)$, since $\mathbf{x}\in V(\mathbf{y},D)^\perp$, we have $<\mathbf{x},\mathbf{d}_i>=0$ and then $\mathbf{d}_i\in H(\mathbf{x},D).$ Therefore $H(\mathbf{y},D)\subseteq H(\mathbf{x},D)$. By {\bf Proposition \ref{31}}, we get $\mathbf{c(x)}\preceq \mathbf{c(y)}$. But $\mathbf{c(x)}, \mathbf{c(y)}$ are linearly independent.  Hence, $\mathbf{c(y)}$ is not minimal.

The proof is completed.
\qed\end{proof}

By {\bf Theorem \ref{sn}}, now we can present  a new sufficient and necessary condition for linear codes over $\mathbb{F}_q$ to be minimal.
\begin{thm}\label{sn1}
The following three conditions are equivalent:
\begin{enumerate}
  \item $\mathcal{C}(D)$ is minimal;
  \item for any $\mathbf{y}\in \mathbb{F}_q^k\backslash \{\mathbf{0}\}$,  {\rm{dim}}$V(\mathbf{y},D)=k-1$;
  \item for any $\mathbf{y}\in \mathbb{F}_q^k\backslash \{\mathbf{0}\}$, $V(\mathbf{y},D)=H(\mathbf{y})$.
\end{enumerate}
\end{thm}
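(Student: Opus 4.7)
The plan is to derive Theorem \ref{sn1} as an immediate corollary of Theorem \ref{sn} by universally quantifying the pointwise equivalence over all nonzero $\mathbf{y} \in \mathbb{F}_q^k$.

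First I would recall the definition: $\mathcal{C}(D)$ is minimal precisely when every one of its (nonzero) codewords is minimal. Since $\operatorname{rank} D = k$, the generator matrix $G = (\mathbf{d}_1^T, \ldots, \mathbf{d}_n^T)$ has rank $k$, so the linear map $\mathbf{y} \mapsto \mathbf{c}(\mathbf{y}) = \mathbf{y} G$ is an $\mathbb{F}_q$-linear isomorphism between $\mathbb{F}_q^k$ and $\mathcal{C}(D)$. In particular, $\mathbf{c}(\mathbf{y}) \neq \mathbf{0}$ if and only if $\mathbf{y} \neq \mathbf{0}$, so the nonzero codewords are indexed bijectively by $\mathbf{y} \in \mathbb{F}_q^k \setminus \{\mathbf{0}\}$.

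Next I would apply Theorem \ref{sn} to each such $\mathbf{y}$: it states exactly that $\mathbf{c}(\mathbf{y})$ is minimal in $\mathcal{C}(D)$ if and only if $\dim V(\mathbf{y}, D) = k-1$, and that this in turn is equivalent to $V(\mathbf{y}, D) = H(\mathbf{y})$. Conjoining this pointwise equivalence over all $\mathbf{y} \in \mathbb{F}_q^k \setminus \{\mathbf{0}\}$ yields the three-way equivalence claimed in Theorem \ref{sn1}. (The zero codeword is trivially minimal, as it is only covered by itself $= a\cdot\mathbf{0}$, so restricting attention to $\mathbf{y} \neq \mathbf{0}$ is harmless.)

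There is no genuine obstacle here, since all of the substantive content has been established in Theorem \ref{sn}. The only thing worth highlighting is the use of $\operatorname{rank} D = k$ to ensure the correspondence $\mathbf{y} \leftrightarrow \mathbf{c}(\mathbf{y})$ is injective, which allows us to identify "every nonzero codeword is minimal" with "for every $\mathbf{y} \in \mathbb{F}_q^k \setminus \{\mathbf{0}\}$, $\mathbf{c}(\mathbf{y})$ is minimal." After that identification, distributing the universal quantifier across the equivalences of Theorem \ref{sn} completes the proof.
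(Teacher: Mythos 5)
Your proof is correct and matches the paper's approach: the paper states Theorem \ref{sn1} as an immediate consequence of Theorem \ref{sn} (giving no separate proof), which is exactly your argument of quantifying the pointwise equivalence over all nonzero $\mathbf{y}$. Your remark that $\operatorname{rank} D = k$ makes $\mathbf{y} \mapsto \mathbf{c}(\mathbf{y})$ injective is a worthwhile detail the paper leaves implicit.
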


\section{The Parameters of Minimal Linear Codes}\label{section Parameters}

Let $k\leq n$ be two positive integers and $q$ a prime power.   The basic question in minimal linear codes is to determine if there exists an $[n,k]_q$ minimal linear code. In this section, we give  two theorems and a corollary, which can
partially answer this question. First, we give two propositions.
\begin{prop}\label{41}
Let $D=\mathbb{F}_q^k$. Then $\mathcal{C}(D)$ is a $[q^k,k]_q$ minimal linear code.
\end{prop}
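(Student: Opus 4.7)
The plan is to verify the statement by a direct application of Theorem \ref{sn1}(3), namely the characterization $V(\mathbf{y},D)=H(\mathbf{y})$ for all nonzero $\mathbf{y}$.

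First I would check the length and dimension. Since $D=\mathbb{F}_q^k$, the multiset has cardinality $\#D=q^k$, giving $n=q^k$. It contains a basis of $\mathbb{F}_q^k$ (for instance the standard basis), so $\mathrm{rank}\,D=k$, which ensures $\mathcal{C}(D)$ is an $[q^k,k]_q$ linear code in the sense of the paper.

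Next I would verify minimality via the criterion of Theorem \ref{sn1}. Fix any $\mathbf{y}\in\mathbb{F}_q^k\setminus\{\mathbf{0}\}$. By definition,
\[
H(\mathbf{y},D)=D\cap H(\mathbf{y})=\mathbb{F}_q^k\cap H(\mathbf{y})=H(\mathbf{y}),
\]
since $H(\mathbf{y})\subseteq\mathbb{F}_q^k=D$. Then
\[
V(\mathbf{y},D)=\mathrm{Span}\bigl(H(\mathbf{y},D)\bigr)=\mathrm{Span}\bigl(H(\mathbf{y})\bigr)=H(\mathbf{y}),
\]
because $H(\mathbf{y})$ is already a subspace. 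Thus condition (3) of Theorem \ref{sn1} is satisfied for every nonzero $\mathbf{y}$, and $\mathcal{C}(D)$ is minimal.

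There is really no obstacle here: the proposition is essentially a sanity check showing that Theorem \ref{sn1} is nonvacuous, and the argument reduces to the tautology that $H(\mathbf{y})$ spans itself once all its vectors lie in $D$. The only thing worth flagging is that the parameters $n=q^k$ are far from optimal; this proposition merely establishes the existence of some minimal $[n,k]_q$ code, which is the starting point for the more refined bounds on $n(k;q)$ developed later in the section.
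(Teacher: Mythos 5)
Your proof is correct and follows essentially the same route as the paper: both verify condition (3) of Theorem \ref{sn1} by observing that $H(\mathbf{y},D)=H(\mathbf{y})$ when $D=\mathbb{F}_q^k$, so that $V(\mathbf{y},D)=H(\mathbf{y})$ (the paper phrases this via the sandwich $H(\mathbf{y},D)\subseteq V(\mathbf{y},D)\subseteq H(\mathbf{y})$, you via the fact that a subspace spans itself — the same observation). Your additional check of the parameters $n=q^k$ and $\mathrm{rank}\,D=k$ is a harmless extra that the paper leaves implicit.
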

\begin{proof}
For any $\mathbf{y}\in \mathbb{F}_q^k\backslash \{\mathbf{0}\}$, $H(\mathbf{y},D)=D\cap H(\mathbf{y})=H(\mathbf{y})$ and $H(\mathbf{y},D)\subseteq V(\mathbf{y},D)\subseteq H(\mathbf{y})$, so $V(\mathbf{y},D)=H(\mathbf{y})$. By {\bf Theorem \ref{sn1}},
$\mathcal{C}(D)$ is  minimal.
\qed\end{proof}

\begin{prop}\label{42}
Let $D_1\subseteq D_2$ be two multisets with elements in $\mathbb{F}_q^k$. If $\mathcal{C}(D_1)$ is minimal, then $\mathcal{C}(D_2)$ is minimal.
\end{prop}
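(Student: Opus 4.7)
The plan is to reduce this directly to the sufficient and necessary condition in Theorem \ref{sn1}, namely that $\mathcal{C}(D)$ is minimal precisely when $V(\mathbf{y},D)=H(\mathbf{y})$ for every nonzero $\mathbf{y}\in\mathbb{F}_q^k$. The key observation is that $V(\mathbf{y},D)$ is monotone in $D$: enlarging the defining multiset can only make the span $V(\mathbf{y},D)$ larger, while the upper bound $H(\mathbf{y})$ stays the same. So if the span already fills $H(\mathbf{y})$ for $D_1$, it must still fill $H(\mathbf{y})$ for any $D_2\supseteq D_1$.

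Concretely, I would first check that $\mathcal{C}(D_2)$ is indeed a linear code of dimension $k$. Since $\mathcal{C}(D_1)$ is minimal and (implicitly) an $[n_1,k]_q$ code, $D_1$ has rank $k$ in $\mathbb{F}_q^k$, and $D_1\subseteq D_2$ forces $\mathrm{rank}\,D_2=k$ as well. Next, fix an arbitrary $\mathbf{y}\in\mathbb{F}_q^k\setminus\{\mathbf{0}\}$. From $D_1\subseteq D_2$ we get
\[
H(\mathbf{y},D_1)=D_1\cap H(\mathbf{y})\ \subseteq\ D_2\cap H(\mathbf{y})=H(\mathbf{y},D_2),
\]
and taking $\mathbb{F}_q$-spans preserves this inclusion, so $V(\mathbf{y},D_1)\subseteq V(\mathbf{y},D_2)$.

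By hypothesis and Theorem \ref{sn1} applied to $D_1$, we have $V(\mathbf{y},D_1)=H(\mathbf{y})$. Combined with the chain of inclusions $V(\mathbf{y},D_1)\subseteq V(\mathbf{y},D_2)\subseteq H(\mathbf{y})$ (the second inclusion being part of the definition), we conclude $V(\mathbf{y},D_2)=H(\mathbf{y})$. Since $\mathbf{y}$ was arbitrary, Theorem \ref{sn1} then gives that $\mathcal{C}(D_2)$ is minimal.

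There is essentially no obstacle here; the entire content of the proposition is the monotonicity of $H(\mathbf{y},\cdot)$ and hence of $V(\mathbf{y},\cdot)$ in the defining multiset, which is why Theorem \ref{sn1} is doing all the work. The only minor point to keep in mind is that $D_1,D_2$ are multisets, but intersection with the linear subspace $H(\mathbf{y})$ and taking spans are both well-defined operations on multisets, and multiplicities do not affect either $H(\mathbf{y},D)$ as a spanning set or its span $V(\mathbf{y},D)$, so this causes no trouble.
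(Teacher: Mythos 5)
Your proof is correct and follows essentially the same route as the paper's: both argue that $D_1\subseteq D_2$ gives $V(\mathbf{y},D_1)\subseteq V(\mathbf{y},D_2)\subseteq H(\mathbf{y})$ and then apply Theorem \ref{sn1} in both directions. Your extra remark that $\mathrm{rank}\,D_2=k$ is a harmless additional check the paper leaves implicit.
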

\begin{proof}
For any $\mathbf{y}\in \mathbb{F}_q^k\backslash \{\mathbf{0}\}$, since $D_1\subseteq D_2$, we have $H(\mathbf{y},D_1)\subseteq H(\mathbf{y},D_2)$ and $V(\mathbf{y},D_1)\subseteq V(\mathbf{y},D_2)\subseteq H(\mathbf{y})$. Since $\mathcal{C}(D_1)$ is minimal, by {\bf Theorem \ref{sn1}}, $V(\mathbf{y},D_1)= H(\mathbf{y})$. So $V(\mathbf{y},D_2)= H(\mathbf{y})$. Using {\bf Theorem \ref{sn1}} once again, we get  $\mathcal{C}(D_2)$ is minimal.
\qed\end{proof}

Let $N(k;q):=\{n\in \mathbb{N}^+ \mid$ there exists an $[n,k]_q$ minimal linear code$\}$. By {\bf Proposition \ref{41}}, we have $q^k\in N(k;q)$ and $N(k;q)\neq \emptyset$. Define $n(k;q):=$min$N(k;q)$. Then we have the following theorem.
\begin{thm}\label{43}
For any positive integer $n$, there exists an $[n,k]_q$ minimal linear code if and only if $n\geq n(k;q)$.

\end{thm}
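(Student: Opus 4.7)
The plan is to show that the set $N(k;q)$ is upward closed, i.e.\ that $n \in N(k;q)$ implies $n+1 \in N(k;q)$. Once this monotonicity is established, the theorem follows immediately from the definition of $n(k;q) = \min N(k;q)$: the forward implication ``$n \geq n(k;q) \Rightarrow n \in N(k;q)$'' is a direct induction on $n - n(k;q)$ starting from $n(k;q) \in N(k;q)$, and the reverse implication ``$n \in N(k;q) \Rightarrow n \geq n(k;q)$'' is just the definition of minimum.

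To prove the upward-closure step, suppose $n \in N(k;q)$, so there exists an $[n,k]_q$ minimal linear code. By the representation introduced in Section~\ref{section Preliminaries}, such a code can be written in the form $\mathcal{C}(D_1)$ for some multiset $D_1 \subseteq \mathbb{F}_q^k$ with $\#D_1 = n$ and $\mathrm{rank}\,D_1 = k$. Pick any vector $\mathbf{v} \in \mathbb{F}_q^k$ (for instance any element of $D_1$ itself) and set $D_2 := D_1 \cup \{\mathbf{v}\}$ (as multisets, so $\#D_2 = n+1$). Since $D_1 \subseteq D_2$, we still have $\mathrm{rank}\,D_2 = k$, so $\mathcal{C}(D_2)$ is a genuine $[n+1, k]_q$ linear code. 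Proposition~\ref{42} applied to $D_1 \subseteq D_2$ yields that $\mathcal{C}(D_2)$ is minimal, hence $n+1 \in N(k;q)$.

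The main (and essentially only) subtlety is to make sure that the extended multiset $D_2$ defines a code of the right parameters, i.e.\ that $\mathrm{rank}\,D_2 = k$ and $\#D_2 = n+1$; both are immediate from $D_1 \subseteq D_2$ and $\#D_1 = n$. Everything else is a bookkeeping consequence of Proposition~\ref{42} together with the fact that $N(k;q)$ is nonempty (guaranteed by Proposition~\ref{41}, which yields $q^k \in N(k;q)$), so that $n(k;q)$ is well defined as the minimum of a nonempty subset of $\mathbb{N}^+$. No obstacle of real mathematical depth arises; the work is entirely in setting up the monotonicity via the multiset-extension trick.
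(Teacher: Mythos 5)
Your proof is correct and takes essentially the same route as the paper: both arguments start from a minimal code $\mathcal{C}(D_1)$ realizing $n(k;q)$ and enlarge the defining multiset, invoking Proposition~\ref{42} to preserve minimality (the paper appends all $n - n(k;q)$ extra elements at once, while you add one at a time by induction, which is an immaterial difference). Your extra remark that the rank stays $k$ is a harmless bit of care the paper leaves implicit.
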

\begin{proof}
If $n<n(k;q)$, by the definition of $n(k;q)$, there is no $[n,k]_q$ minimal linear code.
If  $n\geq n(k;q)$, by the definition of $n(k;q)$, there exists a multiset $D_1$ such that $\#D_1=n(k;q)$ and $\mathcal{C}(D_1)$ is an $[n(k;q),k]_q$ minimal linear code. Let $D_2$ be any multiset with elements in $\mathbb{F}_q^k$ and $\#D_2=n-n(k;q)$. Let $D=D_1\cup D_2$. Then $\#D=n$ and $D_1\subseteq D$. By {\bf Proposition \ref{42}}, $\mathcal{C}(D)$ is an $[n,k]_q$ minimal linear code.
\qed\end{proof}

 \vskip 0.6mm
From {\bf Theorem \ref{43}}, we see the importance of  $n(k;q)$ in the study of minimal linear codes.  Now we will give  an upper bound and a lower bound of $n(k;q)$.

 First, we shall give an upper bound of $n(k;q)$. Let $D:=\{\mathbf{x}\in \mathbb{F}_q^k\backslash \{\mathbf{0}\}\mid {\rm{wt}}(\mathbf{x})\leq 2\}$. By linear algebra and {\bf Theorem \ref{sn1}}, we can prove that  $\mathcal{C}(D)$ is minimal. In fact, a subset $D_0$ of $D$ is enough. Let $\mathbf{e}_1,\mathbf{e}_2,...,\mathbf{e}_k$ be   the standard basis of $\mathbb{F}_q^k$. We set
$$D'=\{\mathbf{e}_1,\mathbf{e}_2,...,\mathbf{e}_k\} \ \ {\rm{and}}\ \  D''=\{\mathbf{e}_i+a\mathbf{e}_j\mid 1\leq i<j\leq k, a\in \mathbb{F}_q^*\}. $$
we can see that $D_0:=D'\cup D''$ is a subset of $D$ and $\# D_0=(q-1)\frac{k(k-1)}{2}+k$.
\begin{prop}\label{51}
$\mathcal{C}(D_0)$ is a $[(q-1)\frac{k(k-1)}{2}+k,k]_q$ minimal linear code.

\end{prop}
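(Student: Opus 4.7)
The plan is to verify the length and dimension, then apply \textbf{Theorem \ref{sn1}} to show minimality by proving that $V(\mathbf{y},D_0)=H(\mathbf{y})$ for every nonzero $\mathbf{y}\in\mathbb{F}_q^k$. The length is immediate: $|D'|=k$, $|D''|=(q-1)\binom{k}{2}$, and these sets are disjoint because vectors in $D''$ have Hamming weight $2$ while those in $D'$ have weight $1$. Since $D_0\supseteq\{\mathbf{e}_1,\dots,\mathbf{e}_k\}$, its span is all of $\mathbb{F}_q^k$, so $\mathcal{C}(D_0)$ has dimension $k$.

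Fix $\mathbf{y}=(y_1,\dots,y_k)\neq\mathbf{0}$ and partition $[1,k]=S\sqcup T$ with $S=\{i:y_i=0\}$ and $T=\{i:y_i\neq 0\}$. I will first identify $H(\mathbf{y},D_0)$ explicitly: $\mathbf{e}_i\in H(\mathbf{y},D_0)$ iff $i\in S$, and for $i<j$, $\mathbf{e}_i+a\mathbf{e}_j\in H(\mathbf{y},D_0)$ iff $y_i+ay_j=0$ with $a\in\mathbb{F}_q^*$. A short case analysis on whether $i,j$ lie in $S$ or $T$ shows: the case $i,j\in S$ contributes vectors already spanned by $\{\mathbf{e}_i:i\in S\}$; the ``mixed'' cases contribute nothing (the equation $y_i+ay_j=0$ forces $a=0$ or has no solution); and the case $i,j\in T$ contributes exactly one vector $\mathbf{v}_{ij}:=\mathbf{e}_i-(y_i/y_j)\mathbf{e}_j$.

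The core step is a dimension count. Set $s=|S|$ and $t=|T|$, so $s+t=k$ and $t\geq 1$. The $s$ vectors $\{\mathbf{e}_i:i\in S\}$ are linearly independent. I need $t-1$ further vectors from the $\mathbf{v}_{ij}$ (with $i,j\in T$) that are independent modulo $\mathrm{Span}\{\mathbf{e}_i:i\in S\}$. Let $i_0$ be the smallest element of $T$ and consider the $t-1$ vectors $\mathbf{v}_{i_0,j}=\mathbf{e}_{i_0}-(y_{i_0}/y_j)\mathbf{e}_j$ for $j\in T\setminus\{i_0\}$. Projecting onto the $T$-coordinates, any nontrivial relation $\sum_j c_j\mathbf{v}_{i_0,j}=0$ immediately forces $c_j y_{i_0}/y_j=0$ on each $j$-coordinate ($j\neq i_0$), and since $y_{i_0}\neq 0$, every $c_j=0$. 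Thus $V(\mathbf{y},D_0)$ contains $s+(t-1)=k-1$ linearly independent vectors, all lying in the $(k-1)$-dimensional space $H(\mathbf{y})$, so $V(\mathbf{y},D_0)=H(\mathbf{y})$.

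The main (and only real) obstacle is bookkeeping the case analysis to make sure no vectors of $D_0$ orthogonal to $\mathbf{y}$ are overlooked and that the independence argument above works uniformly in the parameters; the boundary case $t=1$ is handled directly because then $\{\mathbf{e}_i:i\in S\}$ already spans all of $H(\mathbf{y})$. Finally, invoking \textbf{Theorem \ref{sn1}} yields that $\mathcal{C}(D_0)$ is minimal, completing the proof.
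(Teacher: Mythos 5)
Your proposal is correct and follows essentially the same route as the paper: both exhibit $k-1$ linearly independent elements of $D_0$ lying in $H(\mathbf{y})$ (the paper's vectors $\alpha_i$, indexed by $i\neq i_0$ for an arbitrary $i_0$ with $y_{i_0}\neq 0$, coincide up to nonzero scalars with your $\{\mathbf{e}_i : i\in S\}\cup\{\mathbf{v}_{i_0,j}\}$) and then invoke Theorems \ref{sn} and \ref{sn1}. Your explicit determination of all of $H(\mathbf{y},D_0)$ is extra bookkeeping the paper skips, but the core dimension count is identical.
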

\begin{proof}
For any $\mathbf{y}\in \mathbb{F}_q^k\backslash  \{\mathbf{0}\}$, there exists $i_0$, such that ${y_{i_0}}\neq 0$.  Define
$$
 \mathbf{\alpha}_i:=\left\{
            \begin{array}{ll}
             {\bf e}_i-y_{i_0}^{-1}y_{i}{\bf e}_{i_0}, & \hbox{if\ $i<i_0;$}\\
              {\bf e}_{i_0}-y_{i}^{-1}y_{i_0} {\bf e}_{i}, & \hbox{if\ $i>i_0,y_{i}\neq 0;$}\\
              {\bf e}_i,& \hbox{if\ $i>i_0,y_{i}= 0.$}
            \end{array}
          \right.
$$

It is obvious that $\mathbf{\alpha}_i\in D_0$ and $\mathbf{\alpha}_i\in \mathbf{y^\perp}$. These imply that
$$\{\mathbf{\alpha}_i, i\neq i_0\}\subseteq H(\mathbf{y},D_0)\subseteq V(\mathbf{y},D_0)\subseteq H(\mathbf{y}).$$
If $\{\mathbf{\alpha}_i, i\neq i_0\}$ is linearly independent, then dim $V(\mathbf{y},D_0)=k-1$. By Theorem \ref{sn} the codeword $\mathbf{c(y)}$ is minimal in $\mathcal{C}(D_0)$. Since $\mathbf{y}$ is an arbitrary vector in $\mathbb{F}_q^k\backslash  \{\mathbf{0}\}$, by Theorem \ref{sn1}, $\mathcal{C}(D_0)$ is  minimal.

Now we prove the linear independence of  $\{\mathbf{\alpha}_i, i\neq i_0\}$. Assume that $\sum_{i\neq i_0}k_i\mathbf{\alpha}_i=\mathbf{0},\ k_i\in \mathbb{F}_q$. Then
$$\sum_{i< i_0}k_i\mathbf{\alpha}_i+\sum_{i> i_0, y_i\neq 0}k_i\mathbf{\alpha}_i+\sum_{i> i_0, y_i= 0}k_i\mathbf{\alpha}_i=\mathbf{0},$$
$$\sum_{i< i_0}k_i(\mathbf{e}_i-y_{i_0}^{-1}y_{i}\mathbf{e}_{i_0}) +\sum_{i> i_0,y_i\neq 0 }k_i(\mathbf{e}_{i_0}-y_{i}^{-1}y_{i_0} \mathbf{e}_{i})+\sum_{i> i_0,y_i= 0}k_i{\bf{e}}_i=\mathbf{0}.$$
Let
$$
 t_i=\left\{
            \begin{array}{ll}
             k_i, & \hbox{if\ $i<i_0;$}\\
              -k_iy_i^{-1}y_{i_0}, & \hbox{if\ $i>i_0,y_{i}\neq 0;$}\\
              k_i,& \hbox{if\ $i>i_0,y_{i}= 0.$}
            \end{array}
          \right.
$$
So
$$\sum_{i< i_0}(t_i\mathbf{e}_i-t_iy_{i_0}^{-1}y_{i}\mathbf{e}_{i_0}) +\sum_{i> i_0,y_i\neq 0}(t_i\mathbf{e}_{i}-t_iy_{i_0}^{-1}y_{i} \mathbf{e}_{i_0} )+\sum_{i> i_0,y_i= 0}t_i{\bf{e}}_i=\mathbf{0},$$
$$\sum_{i\neq i_0}t_i\mathbf{e}_i+ (\sum_{i\neq i_0}-t_iy_{i_0}^{-1}y_{i})\mathbf{e}_{i_0}=\mathbf{0}.$$
By the linear independence of  $\{\mathbf{e}_i, 1\leq i\leq k\}$, we get $t_i=0$, and then $k_i=0$, where $\ 1\leq i\leq k$ and $i\neq i_0$.  So $\{\mathbf{\alpha}_i, i\neq i_0\}$ is linearly independent.  The proof is completed.
\qed\end{proof}

By the definition of $n(k;q)$, we have the following  upper bound.
\begin{cor}\label{52}
$n(k;q)\leq (q-1)\frac{k(k-1)}{2}+k.$
\end{cor}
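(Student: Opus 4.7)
The plan is very short since this corollary is essentially a direct readout of Proposition \ref{51}. I would observe that Proposition \ref{51} has already exhibited an explicit multiset $D_0 \subseteq \mathbb{F}_q^k$, namely $D_0 = D' \cup D''$ with $\# D_0 = (q-1)\frac{k(k-1)}{2}+k$, for which $\mathcal{C}(D_0)$ is a minimal $[\,(q-1)\frac{k(k-1)}{2}+k,\,k\,]_q$ linear code. This means that the integer $n_0 := (q-1)\frac{k(k-1)}{2}+k$ lies in the set
\[
N(k;q) = \{n \in \mathbb{N}^+ \mid \text{there exists an } [n,k]_q \text{ minimal linear code}\}.
\]

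Next I would invoke the very definition $n(k;q) := \min N(k;q)$ introduced just before Theorem \ref{43}. Since $n_0 \in N(k;q)$, the minimum of $N(k;q)$ cannot exceed $n_0$, and therefore
\[
n(k;q) \;\leq\; n_0 \;=\; (q-1)\frac{k(k-1)}{2}+k,
\]
which is exactly the claimed inequality.

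There is no real obstacle here: all the genuine work, namely verifying minimality of $\mathcal{C}(D_0)$ via the linear independence of the family $\{\alpha_i : i \neq i_0\}$ and an appeal to Theorem \ref{sn1}, has already been carried out in Proposition \ref{51}. The only conceptual point worth emphasising is that the definition of $n(k;q)$ requires $N(k;q)$ to be nonempty, which is guaranteed by Proposition \ref{41} (taking $D = \mathbb{F}_q^k$ shows $q^k \in N(k;q)$), so the minimum is well-defined and the bound is meaningful.
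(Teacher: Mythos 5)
Your proof is correct and matches the paper's reasoning exactly: the corollary follows immediately from Proposition \ref{51}, which exhibits a minimal $[(q-1)\frac{k(k-1)}{2}+k,\,k]_q$ code, so that length lies in $N(k;q)$ and bounds its minimum $n(k;q)$ from above. Your extra remark that $N(k;q)\neq\emptyset$ (via Proposition \ref{41}) so the minimum is well-defined is a sensible, if minor, addition.
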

Next we  shall give a lower bound of $n(k;q)$.
\begin{prop}\label{53}
Let $D$ be a multiset with elements in $\mathbb{F}_q^k\backslash  \{\mathbf{0}\}$. If $\mathcal{C}(D)$ is an $[n,k]_q$ minimal linear code, then $n> q(k-1)$.

\end{prop}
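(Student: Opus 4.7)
My plan is to combine Theorem \ref{sn1} with a double-counting argument on incidences between elements of $D$ and the hyperplanes of $\mathbb{F}_q^k$.

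By Theorem \ref{sn1}, the minimality of $\mathcal{C}(D)$ forces $V(\mathbf{y},D) = H(\mathbf{y})$ for every $\mathbf{y} \in \mathbb{F}_q^k \setminus \{\mathbf{0}\}$; since $H(\mathbf{y})$ has dimension $k-1$, this means $D \cap H(\mathbf{y})$ must contain $k-1$ linearly independent vectors, which in particular are $k-1$ distinct elements of $D$. Hence, counting with multiplicity, $|D \cap H(\mathbf{y})| \geq k-1$ for every hyperplane $H(\mathbf{y})$ of $\mathbb{F}_q^k$.

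Next I would double count the incidence set $\{(\mathbf{d}, H) : \mathbf{d} \in D,\ H \text{ a hyperplane of } \mathbb{F}_q^k,\ \mathbf{d} \in H\}$. Summing over the $\frac{q^k-1}{q-1}$ hyperplanes of $\mathbb{F}_q^k$ and using the lower bound above yields an incidence count of at least $(k-1)\cdot\frac{q^k-1}{q-1}$. On the other hand, summing over $\mathbf{d} \in D$: since each such $\mathbf{d}$ is nonzero, the hyperplanes $H(\mathbf{y})$ containing $\mathbf{d}$ correspond bijectively to the one-dimensional subspaces of the $(k-1)$-dimensional space $H(\mathbf{d})$, so there are exactly $\frac{q^{k-1}-1}{q-1}$ of them, producing a total of $n\cdot\frac{q^{k-1}-1}{q-1}$.

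Comparing the two counts gives
\[
n \;\geq\; (k-1)\cdot\frac{q^k-1}{q^{k-1}-1} \;=\; (k-1)\Bigl(q + \frac{q-1}{q^{k-1}-1}\Bigr) \;>\; q(k-1)
\]
for $k\geq 2$; the case $k=1$ reduces to the trivial bound $n\geq 1$. I do not foresee any serious obstacle here; the only delicate point is the multiset bookkeeping when passing from ``$D\cap H(\mathbf{y})$ spans $H(\mathbf{y})$'' to the pointwise lower bound $|D\cap H(\mathbf{y})|\geq k-1$, and from there to the incidence inequality.
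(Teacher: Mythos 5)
Your proposal is correct and follows essentially the same route as the paper: both deduce from the minimality criterion that $D\cap H(\mathbf{y})$ must contain $k-1$ linearly independent (hence distinct) elements, and then double-count incidences to get $n(q^{k-1}-1)\geq (k-1)(q^k-1)$, whence $n>q(k-1)$. The only cosmetic difference is that you index the count by hyperplanes while the paper indexes by nonzero normal vectors $\mathbf{y}$, which just introduces a factor of $q-1$ on both sides that cancels.
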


 \begin{proof}
 We define

$$ X=X(D):=\{(\mathbf{y,d})\mid <\mathbf{y,d}>=0, \mathbf{y}\in \mathbb{F}_q^k\backslash  \{\mathbf{0}\}, \mathbf{d}\in D\}.$$

Then
\begin{equation}\label{e1}
\#X=\sum_{(\mathbf{y,d})\in X}1=\sum_{\mathbf{d}\in D}\sum_{\mathbf{y}\in \mathbb{F}_q^k\backslash  \{\mathbf{0}\}\atop (\mathbf{y,d})\in X}1=\sum_{\mathbf{d}\in D}(q^{k-1}-1)=n(q^{k-1}-1).
 \end{equation}
In another way,
$$\#X=\sum_{(\mathbf{y,d})\in X}1=\sum_{\mathbf{y}\in \mathbb{F}_q^k\backslash  \{\mathbf{0}\}}\sum_{\mathbf{d}\in D\atop (\mathbf{y,d})\in X}1=\sum_{\mathbf{y}\in \mathbb{F}_q^k\backslash  \{\mathbf{0}\}}\#H(\mathbf{y},D).$$
Since $\mathcal{C}(D)$ is  minimal, $\#H(\mathbf{y},D)\geq k-1$. It follows that
\begin{equation}\label{e2}
\#X\geq (k-1)(q^k-1).
 \end{equation}

 Combining (\ref{e1}) and (\ref{e2}), $n(q^{k-1}-1)\geq (k-1)(q^k-1),$ and then
 \begin{eqnarray*}
n&\geq&\frac{q^k-1}{q^{k-1}-1}(k-1)\\
&=&\frac{q^k-q+q-1}{q^{k-1}-1}(k-1)\\
&=&q(k-1)+\frac{q-1}{q^{k-1}-1}(k-1)\\
&>&q(k-1).\\
 \end{eqnarray*}
\qed\end{proof}
By the definition of $n(k;q)$, we have the following  lower bound.
\begin{cor}\label{52'}
$n(k;q)>q(k-1).$
\end{cor}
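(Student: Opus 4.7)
The plan is to derive Corollary \ref{52'} as an essentially immediate consequence of Proposition \ref{53}. By the definition of $n(k;q) = \min N(k;q)$, the set $N(k;q)$ being nonempty (since $q^k \in N(k;q)$ by Proposition \ref{41}), there exists a multiset $D$ with $\#D = n(k;q)$ and $\mathrm{rank}(D) = k$ such that $\mathcal{C}(D)$ is an $[n(k;q), k]_q$ minimal linear code. The goal is then to invoke Proposition \ref{53} with this $D$.

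The one small wrinkle is that Proposition \ref{53} requires the elements of $D$ to lie in $\mathbb{F}_q^k \setminus \{\mathbf{0}\}$, whereas the general framework only demands $D \subseteq \mathbb{F}_q^k$. I would address this by a \emph{without loss of generality} reduction: suppose $\mathbf{0} \in D$, and let $i_0$ be a position with $\mathbf{d}_{i_0} = \mathbf{0}$. Then the $i_0$-th coordinate of every codeword $\mathbf{c}(\mathbf{x}) = (\mathbf{x}\mathbf{d}_1^T, \ldots, \mathbf{x}\mathbf{d}_n^T)$ is $\mathbf{x}\mathbf{0}^T = 0$, so deleting coordinate $i_0$ preserves the support of every codeword and therefore preserves minimality. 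The rank of the reduced multiset is still $k$, since $\mathbf{0}$ contributes nothing to the span. This produces an $[n(k;q)-1, k]_q$ minimal linear code, contradicting the minimality of $n(k;q)$. Hence $\mathbf{0} \notin D$.

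With this reduction in hand, Proposition \ref{53} applies directly to the multiset $D$, yielding
\[
n(k;q) = \#D > q(k-1),
\]
which is exactly the claimed lower bound. The substance of the argument is concentrated in Proposition \ref{53}, whose proof via double counting of the incidence set $X = \{(\mathbf{y},\mathbf{d}) : \langle \mathbf{y},\mathbf{d} \rangle = 0\}$ is already given above; the corollary itself requires no additional idea beyond the simple WLOG step. Accordingly, I do not expect any genuine obstacle in writing out this proof, the only point of care being the justification that zero entries of $D$ may be discarded without loss.
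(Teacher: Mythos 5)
Your proposal is correct and follows the same route as the paper: the corollary is obtained by applying Proposition \ref{53} to a minimal code attaining $n(k;q)$. Your additional WLOG step discarding any zero vectors in $D$ (which the paper leaves implicit) is valid and in fact makes the deduction slightly more careful than the paper's one-line justification.
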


Combining ${\bf Corollary\ \ref{52}}$ and ${\bf Corollary\ \ref{52'}}$, we have the following theorem.
\begin{thm}
$$q(k-1)<n(k;q)\leq (q-1)\frac{k(k-1)}{2}+k.$$
As a special case, when $k=2$, the specific expression of $n(2;q)$ is as follows:  $$n(2;q)=q+1.$$
\end{thm}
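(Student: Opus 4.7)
The plan is almost immediate: the two bounds are simply Corollary~\ref{52'} and Corollary~\ref{52} stated together, so there is no additional work needed for the inequality $q(k-1)<n(k;q)\leq (q-1)\frac{k(k-1)}{2}+k$. I would just cite those two corollaries and observe that combining them yields the displayed bounds.

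For the special case $k=2$, I would substitute $k=2$ into the general bounds. The lower bound becomes $n(2;q)>q(2-1)=q$, and the upper bound becomes $n(2;q)\leq (q-1)\frac{2\cdot 1}{2}+2=q+1$. Since $n(2;q)$ is a positive integer sandwiched between $q$ (exclusive) and $q+1$ (inclusive), we must have $n(2;q)=q+1$. This is the cleanest route because no separate construction or non-existence argument is required.

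There is essentially no obstacle in this proof, because the real work (the construction via $D_0$ and the counting argument via the incidence set $X$) has already been done in Propositions~\ref{51} and \ref{53}. The only thing worth noting is that the lower bound argument from Proposition~\ref{53} requires the defining multiset to consist of nonzero vectors, but this is harmless since a zero column contributes nothing to any codeword's support and may be freely deleted without affecting minimality. I would briefly remark on this equivalence so that the application of Corollary~\ref{52'} is justified without ambiguity.
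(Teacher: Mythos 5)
Your proposal is correct and matches the paper's approach exactly: the theorem is obtained by combining Corollary \ref{52} and Corollary \ref{52'}, and the case $k=2$ follows by substituting into the bounds to get $q<n(2;q)\leq q+1$. Your added remark about deleting zero columns is a small but legitimate point of care that the paper leaves implicit.
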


As a corollary, we have:
\begin{cor}\label{54}
Let $k\leq n$ be two positive integers and $q$ a prime power. Then we have
\begin{enumerate}
  \item if $n\geq (q-1)\frac{k(k-1)}{2}+k$, then there exists an $[n,k]_q$ minimal linear code;
  \item if $n\leq q(k-1)$, then any $[n,k]_q$  linear code is not minimal.
\end{enumerate}
\end{cor}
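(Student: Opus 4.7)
The plan is to observe that this corollary is essentially an unpacking of the preceding theorem via Theorem \ref{43}, so the real content has already been proved in Propositions \ref{51} and \ref{53}. I would keep the proof short and purely logical, letting the two bounds do all the work.

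First I would invoke Theorem \ref{43}, which characterizes the set of admissible lengths: an $[n,k]_q$ minimal linear code exists if and only if $n \geq n(k;q)$. With this equivalence in hand, both statements of the corollary reduce to straightforward inequalities between $n$ and $n(k;q)$.

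For part (1), I would argue that the upper bound $n(k;q) \leq (q-1)\tfrac{k(k-1)}{2} + k$ established in Corollary \ref{52} (itself a consequence of the explicit construction of $\mathcal{C}(D_0)$ in Proposition \ref{51}) gives $n \geq (q-1)\tfrac{k(k-1)}{2} + k \geq n(k;q)$, so Theorem \ref{43} produces the desired minimal code. For part (2), the strict lower bound $n(k;q) > q(k-1)$ from Corollary \ref{52'} (proven via the double-counting of the incidence set $X$ in Proposition \ref{53}) shows that $n \leq q(k-1) < n(k;q)$; hence by Theorem \ref{43} no $[n,k]_q$ minimal linear code exists, i.e.\ every $[n,k]_q$ linear code fails to be minimal.

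Since all substantive steps are already in place, there is no real obstacle here; the only thing to watch is the strict versus non-strict inequalities. In particular, in part (2) I need to use the \emph{strict} inequality $q(k-1) < n(k;q)$ (rather than $\leq$) to conclude that even $n = q(k-1)$ rules out minimality, matching the corollary's stated threshold. The proof itself will be two or three lines of LaTeX citing the two bounds together with Theorem \ref{43}.
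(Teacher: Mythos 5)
Your proposal is correct and matches the paper's (implicit) argument exactly: the corollary is stated without proof precisely because it follows immediately from Theorem \ref{43} combined with the upper bound of Corollary \ref{52} and the strict lower bound of Corollary \ref{52'}. Your attention to the strictness of the lower bound in part (2) is the right detail to check, and nothing further is needed.
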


\section{Applications}\label{section Applications}

Let $k,t$ be two positive integers with $\frac{k}{2}<t<k$. Then $t\geq k-t+1$. We define the following six subsets of $\mathbb{F}_q^k$:
$$S:={\rm{Span}}\{\mathbf{e}_1,...,\mathbf{e}_t\}\setminus \{\mathbf{0}\},$$
$$S':={\rm{Span}}\{\mathbf{e}_{k-t+1},...,\mathbf{e}_k\}\setminus \{\mathbf{0}\},$$
$$S'':={\rm{Span}}\{\mathbf{e}_{k-t+2},...,\mathbf{e}_k\}\setminus \{\mathbf{0}\},$$
$$\Omega_1=\cup_{i=t+1}^k(\mathbf{e}_i+S),\ \Omega_2=\cup_{i=1}^{k-t}(\mathbf{e}_i+S'),\ \Omega_3=\cup_{i=1}^{k-t+1}(\mathbf{e}_i+S'').\ $$
Let
$$D_1=S\cup S'\cup\Omega_2,$$
$$D_2=S\cup S''\cup\Omega_3,$$
$$D_3=S\cup S'\cup\Omega_1\cup\Omega_2,$$
$$D_4=S\cup S'\cup\Omega_1\cup\Omega_3.$$

In this section, we use {\bf Proposition \ref{42}} and {\bf Proposition \ref{51}} to prove that the four classes of linear codes $\mathcal{C}(D_1),$ $\mathcal{C}(D_2),$ $\mathcal{C}(D_3),$ $\mathcal{C}(D_4)$ are all minimal. Our results generalized those in \cite{LY2019}.  In our results $q$ can be an arbitrary  prime power, while  in \cite{LY2019} $q$ equals 2. Moreover, our method is much simpler than their's.

\begin{thm}\label{61}
The four classes of linear codes $\mathcal{C}(D_1),$ $\mathcal{C}(D_2),$ $\mathcal{C}(D_3),$ $\mathcal{C}(D_4)$ are all minimal.

\end{thm}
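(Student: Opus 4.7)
The plan is to reduce each of the four cases to the already established minimality of $\mathcal{C}(D_0)$ from Proposition~\ref{51}, by showing that $D_0 \subseteq D_i$ (as multisets of vectors in $\mathbb{F}_q^k$) and then invoking the monotonicity statement of Proposition~\ref{42}. Recall that $D_0 = D' \cup D''$ with $D'=\{\mathbf{e}_1,\dots,\mathbf{e}_k\}$ and $D''=\{\mathbf{e}_i+a\mathbf{e}_j : 1\leq i<j\leq k,\ a\in\mathbb{F}_q^*\}$, so what needs to be verified is entirely combinatorial: every standard basis vector and every weight-$2$ vector of the form $\mathbf{e}_i+a\mathbf{e}_j$ must appear in $D_1$ and in $D_2$. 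Once this is done, $D_3\supseteq D_1$ and $D_4\supseteq D_2$ will give the remaining two cases for free.

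The key numerical fact to keep on hand is that $t>k/2$ implies $k-t+1\leq t$ (equivalently $k+1\leq 2t$), which will be used repeatedly to split the index range $[1,k]$ into the overlapping blocks $[1,t]$, $[k-t+1,k]$, $[k-t+2,k]$ in a way that covers every pair $(i,j)$.

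For $D_1 = S\cup S'\cup\Omega_2$, I would first observe $\mathbf{e}_1,\dots,\mathbf{e}_t\in S$ and $\mathbf{e}_{k-t+1},\dots,\mathbf{e}_k\in S'$, which together cover $D'$ because $k-t+1\leq t$. For a pair $i<j$ with $a\in\mathbb{F}_q^*$, I would split on $j$: if $j\leq t$ then $\mathbf{e}_i+a\mathbf{e}_j\in S$; otherwise $j\geq t+1\geq k-t+1$, and I split further on $i$: if $i\geq k-t+1$ then both indices lie in the $S'$ block, and if $i\leq k-t$ then $a\mathbf{e}_j\in S'$ together with $i\in[1,k-t]$ places $\mathbf{e}_i+a\mathbf{e}_j$ in $\Omega_2 = \bigcup_{i=1}^{k-t}(\mathbf{e}_i+S')$. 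The case of $D_2 = S\cup S''\cup\Omega_3$ is handled identically after replacing $S'$ by $S''$ and $\Omega_2$ by $\Omega_3$; the threshold shifts from $k-t+1$ to $k-t+2$, but the inequality $k-t+1\leq t$ still guarantees that every $\mathbf{e}_\ell$ is caught by either $S$ or $S''$ (with $\mathbf{e}_{k-t+1}$ landing in $S$).

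Finally, since $D_3 = D_1\cup\Omega_1\supseteq D_1$ and $D_4 = S\cup S'\cup\Omega_1\cup\Omega_3\supseteq S\cup S''\cup\Omega_3 = D_2$ (using $S''\subseteq S'$), Proposition~\ref{42} immediately upgrades minimality from $\mathcal{C}(D_1),\mathcal{C}(D_2)$ to $\mathcal{C}(D_3),\mathcal{C}(D_4)$. The only real obstacle is the bookkeeping of index ranges in the $D_0\subseteq D_1$ and $D_0\subseteq D_2$ inclusions, but the hypothesis $t>k/2$ collapses all subcases cleanly and there is no estimation to do.
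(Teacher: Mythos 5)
Your proposal is correct and follows essentially the same route as the paper: reduce to the minimality of $\mathcal{C}(D_0)$ from Proposition~\ref{51} via the monotonicity of Proposition~\ref{42}, verify $D_0\subseteq D_1$ and $D_0\subseteq D_2$ by the same index case analysis using $k-t+1\leq t$, and obtain $D_3,D_4$ from the inclusions $D_1\subseteq D_3$ and $D_2\subseteq D_4$. No substantive differences.
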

\begin{proof}
Let $$D'=\{\mathbf{e}_1,\mathbf{e}_2,...,\mathbf{e}_k\}, \ \ \ \  D''=\{\mathbf{e}_i+a\mathbf{e}_j\mid 1\leq i<j\leq k, a\in \mathbb{F}_q^*\}$$ and $$D_0:=D'\cup D''.$$ By {\bf Proposition \ref{51}}, $\mathcal{C}(D_0)$ is minimal. If we prove that $D_0\subseteq D_i$ for $1\leq i\leq 4$, by {\bf Proposition \ref{42}}, $\mathcal{C}(D_i)$ is minimal. Since $D_1\subseteq D_3$ and $D_2\subseteq D_4$, it is enough to prove that $D_0\subseteq D_1$ and $D_0\subseteq D_2$.

First, we prove $D_0\subseteq D_1$. Since $\frac{k}{2}<t,\ t\geq k-t+1$.

 If $1\leq i\leq t$, $\mathbf{e}_i\in S$; if $i>t\geq k-t+1$, $\mathbf{e}_i\in S'$. So
 \begin{equation}\label{62}
 D'\subseteq S\cup S'\subseteq D_1.
 \end{equation}

 For $1\leq i<j\leq k$, if $j\leq t$, then $\mathbf{e}_i+a\mathbf{e}_j\in S$; if $j>t\geq k-t+1$ and $i\leq k-t$, then $\mathbf{e}_i+a\mathbf{e}_j\in \Omega_2$;  if $j>t\geq k-t+1$ and $i\geq k-t+1$, then $\mathbf{e}_i+a\mathbf{e}_j\in S'$. Thus \begin{equation}\label{62}
 D''\subseteq S \cup S'\cup \Omega_2\subseteq D_1.
 \end{equation}
 Combining (\ref{61}) and (\ref{62}), we get $D_0\subseteq D_1.$

 Next, we prove $D_0\subseteq D_2$.

 If $1\leq i\leq t$, $\mathbf{e}_i\in S$; if $i>t\geq k-t+1$, $\mathbf{e}_i\in S''$. So
 \begin{equation}\label{63}
 D'\subseteq S\cup S''\subseteq D_2.
 \end{equation}

 For $1\leq i<j\leq k$, if $j\leq t$, then $\mathbf{e}_i+a\mathbf{e}_j\in S$; if $j>t\geq k-t+1$ and $i\leq k-t+1$, then $\mathbf{e}_i+a\mathbf{e}_j\in \Omega_3$;  if $j>t\geq k-t+1$ and $i> k-t+1$, then $\mathbf{e}_i+a\mathbf{e}_j\in S''$. Thus \begin{equation}\label{64}
 D''\subseteq S \cup S'\cup \Omega_2\subseteq D_1.
 \end{equation}
 Combining (\ref{63}) and (\ref{64}), we get $D_0\subseteq D_2.$

Thus the four classes of linear codes  are all minimal. This completes the proof.
\qed\end{proof}

\section {Concluding remarks}\label{section Concluding remarks}

Let $k\leq n$ be two positive integers and $q$ a prime power.   The basic question in minimal linear codes is to determine if there exists an  $[n,k]_q$ minimal linear code.

 In this paper, we propose a new sufficient and necessary condition  for
 linear codes to be minimal. Using this condition, it is easy to construct minimal linear codes
or to prove  some linear codes are minimal.  This new sufficient and necessary condition is very powerful.

As one application, we use the new sufficient and necessary condition to partially give an answer to  the  basic question in minimal linear codes. We prove that, there is a positive integer $n(k;q)$ satisfies the following condition: for any positive integer $n$, there exists an $[n,k]_q$ minimal linear code if and only if $n\geq n(k;q)$. Moreover, we  obtain  an upper bound and a lower bound of $n(k;q)$.
When $k=2$ the expression of $n(2;q)$ is totally determined.
Next, we will continue to study $n(k;q)$. We look forward to getting a new upper bound or a new lower bound of  $n(k;q)$. Moreover, we hope to give the complete answer to the basic question in minimal linear codes. That is to say, we hope to get the specific expression of  $n(k;q)$ for any $k\geq 3$ and any prime power $q$.

As another application, we present four classes of minimal linear codes, which
generalize the results about the binary case given in \cite{LY2019}. One can find that our method is
much easier and more effective. Next, we will use our method to give more general constructions of minimal linear codes.



 {}
\end{document}